\newtheorem{theorem}{Theorem}
\newtheorem{lemma}[theorem]{Lemma}
\newcommand{\olsi}[1]{\,\overline{\!{#1}}}
\title{\Large\textbf{Statistical Depth Function Random Variables for Univariate Distributions and induced Divergences}}
\author{Rui Ding\footnote{Corresponding author. E-mail address:rui.ding.1@stonybrook.edu }}
\begin{document}
\maketitle

\begin{abstract}
In this paper, we show that the halfspace depth random variable for samples from a univariate distribution with a notion of center is distributed as a uniform distribution on the interval $[0,\frac{1}{2}]$. The simplicial depth random variable has a distribution that first-order stochastic dominates that of the halfspace depth random variable and relates to a Beta distribution. Depth-induced divergences between two univariate distributions can be defined using divergences on the distributions for the statistical depth random variables in-between these two distributions. We discuss the properties of such induced divergences, particularly the depth-induced TVD distance based on halfspace or simplicial depth functions, and how empirical two-sample estimators benefit from such transformations.
\end{abstract}

\section{Introduction and Definitions}
Statistical depth function is a useful tool for nonparametric inference and analysis of the shape of the data, particularly for multivariate data. For a distribution $P\in\mathbb{R}^d$, a corresponding depth function is any function $D(x; P)$ which provides a P-based center-outward ordering of points $x\in\mathbb{R}^d$. Tukey \cite{Tukey75} proposed a halfspace depth function which in the univariate case is closely related to the quantile function of a distribution. The general notions of a desirable statistical depth function have been discussed by Zuo and Serfling \cite{ZS00} and Liu et al. \cite{LPS99}, among many others. Various types of statistical depth functions have been proposed, including the simplicial depth function by Liu \cite{Liu90} and depth functions based on distance functions. Zuo and Serfling \cite{ZS00} concluded that the halfspace depth function satisfies four desirable properties of a statistical depth function. In this paper, we focus mainly on the properties of halfspace depth and simplicial depth function random variables in the univariate case.

The halfspace depth function (HD) is defined for a probability measure $P$ and a point $x$ in $\mathbb{R}^d$ as:
$$HD(x; P)  =  \inf_H\{\mathbb{P}(H): x\in H\},\forall x\in \mathbb{R}^d$$
where $H$ is a closed halfspace that contains $x$ and $\mathbb{P}(\cdot)$ denotes the probability of an event. When $d=1$ and $P$ is a continuous distribution, this resolves to:
$$HD(x; P) = \min\{F(x),1-F(x)\}$$
Hence we can clearly see the relation of HD with the notion of quantile function for univariate distributions. Here $F(x)$ is defined as the cumulative distribution function(CDF) of $P$, and the probability density function(PDF) of $P$ is denoted by $f(x)$.

The simplicial depth function (SD) is defined for a probability measure $P$ and a point $x$ in $\mathbb{R}^d$ as:
$$SD(x;P) = \mathbb{P}(x\in \Delta(X_1,\ldots,X_{d+1}))$$
where $X_1,\ldots,X_{d+1}$ are i.i.d. random variables with distribution $P$ and $\Delta(X_1,\ldots,X_{d+1})$ denotes the simplex formed by these random points as vertices in $\mathbb{R}^d$. In particular, when $d=1$ this resolves to (where $\olsi{X_1 X_2}$ denotes the closed segment between $X_1,X_2\sim P$ which is the 1-D simplex):
$$SD(x;P) = \mathbb{P}(x\in \olsi{X_1 X_2})$$
which for continuous distributions $P$ can be further simplified to:
$$SD(x;P) = 2F(x)(1-F(x))$$

\section{Expectation and Distribution of the Halfspace Depth Function Random Variable}

Let us consider the random variable $X\sim P$, which has PDF $f(x)$ and CDF $F(x)$ for $x\in\mathbb{R}$. The halfspace depth function of the random variable $X$ defines another random variable which we denote by $Z = HD(X;P)$. We next study the property of this random variable $Z$. Obviously the domain of $Z = \min\{F(X),1-F(X)\}$ is $[0,\frac{1}{2}]$.

Consider first the expectation of $Z$ under the probability measure $P$. It can be calculated as:
$$\mathbb{E}_P[Z] = \int_{-\infty}^\infty \min\{F(x),1-F(x)\}f(x)dx = \int_{-\infty}^a F(x)f(x)dx+\int_{a}^{\infty} (1-F(x))f(x)dx$$
where $a$ is the notion of a center for $P$ that satisfies $F(a) = \frac{1}{2}$. The above integral has two terms, which by change of order of integration, we can observe:
$$\int_{-\infty}^a F(x)f(x)dx = \int_{-\infty}^{a}\int_{-\infty}^{x} f(y)f(x)dydx = \int_{-\infty}^{a}\int_{y}^{a} f(x)f(y)dxdy$$
$$= \int_{-\infty}^{a} (F(a) - F(y))f(y)dy = \frac{1}{2}F(a)^2 = \frac{1}{8}$$

$$\frac{1}{2}-\int_{a}^{\infty} F(x)f(x)dx = \int_{a}^{\infty} (1-F(x))f(x)dx = \int_{a}^{\infty}\int_{x}^{\infty} f(y)f(x)dydx = \int_{a}^{\infty}\int_{a}^{y} f(x)f(y)dxdy$$
$$= \int_{a}^{\infty} (F(y) - F(a))f(y)dy = \int_{a}^{\infty} F(y)f(y)dy - \frac{1}{4} = \frac{1}{8}$$
Hence we obtain $\mathbb{E}_P[HD(X;P)] = \frac{1}{4}$. This result does not require that the distribution $P$ is symmetric around its center $a$. Similarly, it can be shown that the higher-order moments are $\mathbb{E}_P[HD(X;P)^m] = \frac{2^{-m}}{m+1}$ for $m\geq1$.

We next show that the random variable $Z = HD(X;P)$ is distributed as a uniform distribution $U(0,\frac{1}{2})$ for general continuous distributions $P$ and $X\sim P$, without the requirement of $P$ being symmetric.

This can be shown be simply considering the CDF of $Z$ (where $Z$ is bounded between $[0,\frac{1}{2}]$ by the definition of the halfspace depth function):
$$F_{Z}(z) = \mathbb{P}(Z\leq z) = \mathbb{P}(F(X)\leq z)+\mathbb{P}(F(X)\geq 1-z) = 2z,\forall z\in[0,\frac{1}{2}]$$
The PDF is $f_Z(z) = 2,\forall z\in[0,\frac{1}{2}]$. Hence we conclude that $Z\sim U(0,\frac{1}{2})$. This shows that the random variable $HD(X;P)$ is a uniform distribution $U(0,\frac{1}{2})$, which has expectation $\frac{1}{4}$ as shown before, and has variance $\frac{1}{48}$.
\section{Expectation and Distribution of the Simplicial Depth Function Random Variable}

Let us consider now $X\sim P$ and the simplicial depth random variable $Z = SD(X;P)$ similar to the previous section for halfspace depth random variable, where $P$ has PDF $f(x)$ and CDF $F(x)$. Obviously the domain of $Z = 2F(X)(1-F(X))$ is $[0,\frac{1}{2}]$ similar to the HD case.

The expectation of $Z$ under the probability measure $P$ can be calculated as:
$$\mathbb{E}_P[Z] = \mathbb{E}_P[2F(X)(1-F(X))] = \int_{u=0}^{1} 2u(1-u)du = \frac{1}{3}$$
Here we used the fact that the random variable $F(X)$ is a uniform $U(0,1)$ random variable. Similarly, it can be shown that the higher-order moments are $\mathbb{E}_P[SD(X;P)^m] = 2^m\frac{\Gamma(m+1)\Gamma(m+1)}{\Gamma(2m+2)}$ for $m\geq1$.

The CDF of the random variable $Z$ can be computed as follows:
$$F_Z(z) = \mathbb{P}(Z\leq z) = \mathbb{P}(2F(X)(1-F(X))\leq z)= \mathbb{P}(F(X)\geq \sqrt{\frac{1}{4}-\frac{z}{2}}+\frac{1}{2})+\mathbb{P}(F(X)\leq \frac{1}{2}-\sqrt{\frac{1}{4}-\frac{z}{2}})$$
$$=1-\sqrt{1-2z},\forall z\in[0,\frac{1}{2}]$$
It is easy to check that $2Z\sim Beta(1,\frac{1}{2})$ is a $Beta(1,\frac{1}{2})$ random variable. The PDF of $Z$ can be obtained easily as $f_Z(z) = \frac{1}{\sqrt{1-2z}},\forall z\in[0,\frac{1}{2}]$. We can also compute the variance of $SD(X;P)$ which turns out to be $\frac{1}{45}$.

It is also easy to observe that on the interval $z\in[0,\frac{1}{2}]$, the CDF $F_{SD}(z) = 1-\sqrt{1-2z}$ is always beneath the CDF of a uniform distribution $F_{HD}(z) = 2z$ since $F_{SD}(z) = 1-\sqrt{1-2z}\leq 2z = F_{HD}(z),\forall z\in[0,\frac{1}{2}]$ with equality only taken at the two endpoints. Hence, by definition, the simplicial depth random variable $SD(X;P)$ first-order dominates the halfspace depth random variable $HD(X;P)$, denoted as $SD(X;P)\succeq_1 HD(X;P)$. The first-order stochastic dominance implies second-order $SD(X;P)\succeq_2 HD(X;P)$ and higher-order stochastic dominance relationships. 
\section{The Kernel Depth Random Variable}

A recent statistical depth function that is widely considered in machine learning applications is the kernel mean embedding, or the h-depth function \cite{WN21}. Here we define this kernel depth function as:
$$KD_k(x; P) = \mathbb{E}_{P}[k(x,X)]$$
where $X\sim P,x\in\mathbb{R}^d$, and $k(. ,.)$ is a chosen positive semi-definite kernel function, for example, the standard Gaussian kernel $k(x,y) = e^{-\frac{\|x-y\|^2}{2}},\forall x,y\in\mathbb{R}^d$. Notice that this kernel depth function also satisfies some basic desirable properties of a proper statistical depth function as defined in \cite{ZS00}. We next relate this notion of kernel depth function to an integral probability metric called the maximum mean discrepancy(MMD), defined similarly via a kernel function. Formally the MMD, also called kernel distance, is defined as:
$$MMD^2_k(P,Q) = \sup_{\|f\|_{\mathcal{H}}\leq 1} |\int fdP-\int fdQ|$$
where $k$ is the chosen positive semi-definite kernel. It has been shown that it is equivalent to:
$$MMD^2_k(P,Q) = \mathbb{E}_{X,X'\sim P}[k(X,X')]+\mathbb{E}_{Y,Y'\sim Q}[k(Y,Y')] - 2\mathbb{E}_{X\sim P,Y\sim Q}[k(X,Y)]$$
We observe that the second definition of MMD distance can be interpreted via the random variable defined through the kernel depth function. In particular, for the same choice of kernel $k$ we can write MMD as:
$$MMD^2_k(P,Q) = \mathbb{E}_{P}[KD_k(X;P)] + \mathbb{E}_{Q}[KD_k(Y;Q)] - \mathbb{E}_{P}[KD_k(X;Q)] - \mathbb{E}_{Q}[KD_k(Y;P)]$$
where $X\sim P, Y\sim Q$. Intuitively, the squared MMD distance equals the sum of two differences of kernel depth random variables with respect to their own distribution and the other distribution. 

\section{Divergence induced by Statistical Depth Function}
For a pair of probability distributions $P,Q$ where we take random variables $X\sim P,Y\sim Q$ and write their CDFs as $F_X,F_Y$, and PDFs as $f_X,f_Y$. Previous sections define statistical depth random variables based on $X,Y$ so that we can write down four random variables $HD(X;P)$, $HD(Y;P)$, $HD(X;Q)$, $HD(Y;Q)$ where the choice of statistical depth function can be replaced by simplicial depth $SD(\cdot; \cdot)$ or kernel depth $KD(\cdot; \cdot)$. Results from section 2 and 3 shows that $HD(X;P),HD(Y;Q)\sim U(0,\frac{1}{2})$ and $2SD(X;P),2SD(Y;Q)\sim Beta(1,\frac{1}{2})$. In this section, we focus on the choice of halfspace depth random variables and simplicial depth random variables. 
\subsection{Divergence induced by Halfspace Depth Function}

We can write the uniform random variable $U(0,\frac{1}{2})$ having probability distribution $U$ which has density $u(x) = 2$ everywhere on $[0,\frac{1}{2}]$. Following notations at the beginning of this section, we consider probability distributions for halfspace depth random variables $HD(X;Q)$ and $HD(Y;P)$. It can be shown that:
$$\mathbb{P}(HD(Y;P)\leq z) = \mathbb{P}(F_X(Y)\leq z) + \mathbb{P}(F_X(Y)\geq 1-z)$$
$$= \mathbb{P}(Y\leq F_X^{-1}(z))+\mathbb{P}(Y\geq F_X^{-1}(1-z)) = 1+F_Y(F_X^{-1}(z))-F_Y(F_X^{-1}(1-z))$$
$\forall z\in[0,\frac{1}{2}]$. Similarly,
$$\mathbb{P}(HD(X;Q)\leq z) =1+F_X(F_Y^{-1}(z))-F_X(F_Y^{-1}(1-z))$$
Denote the probability distributions defined by CDFs above for random variables $V=HD(Y;P)$, $W=HD(X;Q)$ as $P_Q^{HD},Q_P^{HD}$ respectively. We write $F_V(z) = 1+F_Y(F_X^{-1}(z))-F_Y(F_X^{-1}(1-z))$, $F_W(z) = 1+F_X(F_Y^{-1}(z))-F_X(F_Y^{-1}(1-z)),\forall z\in[0,\frac{1}{2}]$. It can also be shown by taking derivatives that the PDFs are given by $\forall z\in[0,\frac{1}{2}]$:
$$f_V(z) = \frac{f_Y(F_X^{-1}(z))}{f_X(F_X^{-1}(z))} + \frac{f_Y(F_X^{-1}(1-z))}{f_X(F_X^{-1}(1-z))}$$
$$f_W(z) = \frac{f_X(F_Y^{-1}(z))}{f_Y(F_Y^{-1}(z))} + \frac{f_X(F_Y^{-1}(1-z))}{f_Y(F_Y^{-1}(1-z))}$$
Consider the following divergence functions between probability distributions:
$D(P_Q^{HD}||U)$ and $D(Q_P^{HD}||U)$. Obviously when $P\neq Q$, the divergence is going to be greater than 0: $D(P_Q^{HD}||U)>0,D(Q_P^{HD}||U)>0$, and when $P=Q$, $D(P_Q^{HD}||U) = 0$. We can use the divergence between these inter-distribution statistical depth probability distributions to proxy the divergence between the original distributions $P,Q$, so this results in the following induced divergence:
$$\tilde{D}^{HD}(P||Q) = D(Q_P^{HD}||U),\tilde{D}^{HD}(Q||P) = D(P_Q^{HD}||U)$$
Here the choice of the generic divergence $D$ can be any f-divergence function, for example, the total variation distance(TVD), which is symmetric. The TVD distance between two probability distributions $P,Q$ with densities $p,q$ on the support domain $\mathcal{X}$ are defined as:
$$TVD(P,Q) = \frac{1}{2}\int_{x\in\mathcal{X}} |p(x)-q(x)|dx$$

The induced divergence $\tilde{D}^{HD}$, in general, is not symmetric hence we can symmetrize it by using:
$$\tilde{D}^{HD}_S(P,Q) = \frac{\tilde{D}^{HD}(P||Q)+\tilde{D}^{HD}(Q||P)}{2}$$
The symmetrized divergence satisfies also the property that it is zero only when $P=Q$ and otherwise greater than zero. We next show an inequality between the depth-induced TVD distance and the original TVD distance between two general distributions $P,Q$.

\begin{lemma}\label{pseudoTVDineq}
For a pair of continuous probability distributions $P,Q$, consider halfspace depth random variables with respective distributions $HD(Y;P)\sim P_Q^{HD},HD(X;Q)\sim Q_P^{HD}$, and let $U$ denote the distribution of a uniform random variable $U(0,\frac{1}{2})$. Then $TVD(P_Q^{HD},U)\leq TVD(P,Q)$ and $TVD(Q_P^{HD},U)\leq TVD(P,Q)$.
\end{lemma}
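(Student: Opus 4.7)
The plan is to express $TVD(P_Q^{HD},U)$ directly from the density formula for $f_V$, apply the triangle inequality to split the absolute value into two pieces, then reverse the change-of-variables hidden in $F_X^{-1}$ to recover integrals over the original space.

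First, I would write
$$TVD(P_Q^{HD},U) = \frac{1}{2}\int_0^{1/2}|f_V(z)-2|\,dz$$
and note that $2 = 1 + 1$ can be split to match the two summands in $f_V$. Using the triangle inequality,
$$|f_V(z)-2| \leq \left|\frac{f_Y(F_X^{-1}(z))}{f_X(F_X^{-1}(z))}-1\right| + \left|\frac{f_Y(F_X^{-1}(1-z))}{f_X(F_X^{-1}(1-z))}-1\right|.$$
So $TVD(P_Q^{HD},U)$ is bounded by the sum of two integrals over $[0,1/2]$.

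Next, I would perform the natural change of variables. Let $a = F_X^{-1}(1/2)$. For the first integral, substitute $x = F_X^{-1}(z)$, giving $dz = f_X(x)\,dx$ with $x$ running over $(-\infty,a]$, so the integrand $|f_Y(x)/f_X(x) - 1| f_X(x) = |f_Y(x)-f_X(x)|$. For the second integral, substitute $x = F_X^{-1}(1-z)$, giving $-dz = f_X(x)\,dx$ with $x$ running over $[a,\infty)$, producing the same expression $|f_Y(x)-f_X(x)|$ on $[a,\infty)$. Adding the two pieces recovers the full real line, so the bound collapses to $\frac{1}{2}\int_{-\infty}^\infty |f_X(x)-f_Y(x)|\,dx = TVD(P,Q)$. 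The argument for $TVD(Q_P^{HD},U) \leq TVD(P,Q)$ is identical with $X$ and $Y$ swapped.

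The calculation is essentially mechanical once the right decomposition is chosen; the only real step that requires any thought is the triangle inequality that splits $2$ as $1 + 1$ across the two terms of $f_V$. Without that step the two summands of $f_V$ remain entangled, and one cannot cleanly recover the integral over $(-\infty,a]\cup[a,\infty)$ after substitution. I do not anticipate additional obstacles, since the density formulas for $f_V$ and $f_W$ are already supplied in the preceding paragraph and the change of variables $z = F_X(x)$ is a bijection on each half.
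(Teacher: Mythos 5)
Your proposal is correct and follows essentially the same route as the paper's proof: apply the triangle inequality to split $|f_V(z)-2|$ into the two ratio terms, then undo the quantile substitution to recover $\frac{1}{2}\int_{-\infty}^{\infty}|f_X-f_Y|$. The only cosmetic difference is that you change variables separately on the two halves $(-\infty,a]$ and $[a,\infty)$, whereas the paper first merges the two terms into a single integral over $[0,1]$ and then substitutes once; the computations are identical.
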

\begin{proof}
Without loss of generality, we only need to show that $TVD(Q_P^{HD},U)\leq TVD(P,Q)$. From the definitions we know $Q_P^{HD}$ has density $f_W(z) = \frac{f_X(F_Y^{-1}(z))}{f_Y(F_Y^{-1}(z))} + \frac{f_X(F_Y^{-1}(1-z))}{f_Y(F_Y^{-1}(1-z))}$ on the domain of support $z\in[0,\frac{1}{2}]$. Then,
$$TVD(Q_P^{HD},U) = \frac{1}{2}\int_{0}^{\frac{1}{2}} |f_W(z)-u(z)|dz = \frac{1}{2}\int_{0}^{\frac{1}{2}} |\frac{f_X(F_Y^{-1}(z))}{f_Y(F_Y^{-1}(z))} + \frac{f_X(F_Y^{-1}(1-z))}{f_Y(F_Y^{-1}(1-z))}-2|dz$$
$$\leq\frac{1}{2}\int_{0}^{\frac{1}{2}} (|\frac{f_X(F_Y^{-1}(z))}{f_Y(F_Y^{-1}(z))}-1| + |\frac{f_X(F_Y^{-1}(1-z))}{f_Y(F_Y^{-1}(1-z))}-1|)dz = \frac{1}{2}\int_{0}^{1} |\frac{f_X(F_Y^{-1}(z))}{f_Y(F_Y^{-1}(z))}-1|dz$$
By change of variables $y = F_Y^{-1}(z)$ which for general continuous distributions $P,Q$ and $z\in[0,1]$ have domain of support $y\in(-\infty,\infty)$ and $dy = \frac{dz}{f_Y(F_Y^{-1}(z))}$, we can rewrite inequality above as (where $f_X,f_Y$ are densities of distributions $P,Q$ respectively),
$$TVD(Q_P^{HD},U)\leq \frac{1}{2}\int_{0}^{1} |\frac{f_X(F_Y^{-1}(z))}{f_Y(F_Y^{-1}(z))}-1|dz = \frac{1}{2}\int_{-\infty}^{\infty} |f_X(y)-f_Y(y)|dy = TVD(P,Q)$$
Similarly $TVD(P_Q^{HD},U)\leq TVD(P,Q)$. This completes the proof.
\end{proof}

By Lemma \ref{pseudoTVDineq}, we have by definition $\tilde{TVD}^{HD}(P||Q)\leq TVD(P,Q)$, $\tilde{TVD}^{HD}(Q||P)\leq TVD(P,Q)$, and trivially $\tilde{TVD}^{HD}_S(P,Q)\leq TVD(P,Q)$. This shows that the (symmetrized) induced TVD based on halfspace depth random variable distributions provides a lower bound for the true TVD between original distributions. Notice that the inequality is tight under additional assumptions, such as if both distributions $P,Q$ have symmetric densities and share the same center, for example, if $P,Q$ are concentric Gaussians. In those cases we have $\tilde{TVD}^{HD}(P||Q) = \tilde{TVD}^{HD}(Q||P) = \tilde{TVD}^{HD}_S(P,Q) = TVD(P,Q)$. More generally, the conditions for the equality to hold in Lemma \ref{pseudoTVDineq} can be stated as follows. If the densities $f_X,f_Y$ of $X,Y$ satisfies,
\begin{equation}\label{eqCondition1}
(f_X(F_Y^{-1}(z)) - f_Y(F_Y^{-1}(z)))(f_X(F_Y^{-1}(1-z))-f_Y(F_Y^{-1}(1-z)))\geq 0,\forall z\in[0,\frac{1}{2}]
\end{equation} 
then $\tilde{TVD}^{HD}(P||Q) = TVD(Q_P^{HD},U) = TVD(P,Q)$. Similarly, if,
\begin{equation}\label{eqCondition2}
(f_Y(F_X^{-1}(z)) - f_X(F_Y^{-1}(z)))(f_Y(F_X^{-1}(1-z))-f_X(F_X^{-1}(1-z)))\geq 0,\forall z\in[0,\frac{1}{2}]
\end{equation} 
then $\tilde{TVD}^{HD}(Q||P) = TVD(P_Q^{HD},U) = TVD(P,Q)$. Trivially, if both \eqref{eqCondition1} and \eqref{eqCondition2} holds, then $\tilde{TVD}^{HD}_S(P,Q) = TVD(P,Q)$.

The case of symmetric and concentric distributions is a special case satisfying the conditions stated in \eqref{eqCondition1} and \eqref{eqCondition2}. We make the following definitions first. We say a distribution $P$ with support $\mathcal{X}$ is symmetric around center $x_P\in\mathcal{X}$ if $\forall x,x'\in\mathcal{X}$ such that $x+x' = 2x_P$ we have $F_X(x) = 1-F_X(x')$ and $f_X(x) = f_X(x')$. Conversely $\forall z\in[0,1]$, $F_X^{-1}(z)+F_X^{-1}(1-z) = 2x_P$ and $f_X(F_X^{-1}(z)) = f_X(F_X^{-1}(1-z))$. We say two symmetric distributions $P,Q$ having the same domain $\mathcal{X}$ and densities $f_X,f_Y$ are concentric if their centers $x_P,x_Q$ are equal: $x_P=x_Q$.
\begin{lemma}\label{concentricEq}
For two symmetric and concentric continuous distributions $P,Q$, under the same definitions with Lemma \ref{pseudoTVDineq}, we have:
$$TVD(P_Q^{HD},U) =TVD(Q_P^{HD},U) = TVD(P,Q)$$
\end{lemma}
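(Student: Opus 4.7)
The plan is to verify the two sufficient conditions \eqref{eqCondition1} and \eqref{eqCondition2} stated just above the lemma; once both hold, the two claimed equalities follow at once from the remarks preceding the lemma (which already reduced tightness in Lemma \ref{pseudoTVDineq} to these pointwise sign conditions on the density difference).

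First I would exploit the concentric symmetry. Because $P$ is symmetric around $x_P$, its density satisfies $f_X(2x_P - t) = f_X(t)$ for every $t$ in the support, and its quantile function satisfies $F_X^{-1}(z) + F_X^{-1}(1-z) = 2x_P$; the analogous identities hold for $Q$ with the same center $x_Q = x_P$. In particular $F_Y^{-1}(1-z) = 2x_P - F_Y^{-1}(z)$, so applying the pointwise symmetry of $f_X$ and of $f_Y$ yields
\[
f_X(F_Y^{-1}(1-z)) = f_X(F_Y^{-1}(z)), \qquad f_Y(F_Y^{-1}(1-z)) = f_Y(F_Y^{-1}(z)),
\]
together with the mirrored versions obtained by swapping the roles of $X$ and $Y$.

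Plugging these identities into \eqref{eqCondition1} collapses the product to $\bigl(f_X(F_Y^{-1}(z)) - f_Y(F_Y^{-1}(z))\bigr)^2 \geq 0$ for every $z \in [0,\tfrac{1}{2}]$, and the same substitution in \eqref{eqCondition2} produces an analogous square that is also nonnegative. So both equality conditions hold, and the preceding discussion gives $TVD(Q_P^{HD},U) = TVD(P,Q)$ and $TVD(P_Q^{HD},U) = TVD(P,Q)$ simultaneously.

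I do not anticipate a substantive analytic obstacle; the argument is essentially bookkeeping on top of Lemma \ref{pseudoTVDineq}. The only point requiring care is translating the quantile-level symmetry $F_Y^{-1}(1-z) = 2x_P - F_Y^{-1}(z)$ into the density-level identities via $f_X(2x_P - t) = f_X(t)$ before plugging into the sign conditions, and noting that the shared center $x_P = x_Q$ is exactly what makes the mixed expressions $f_X \circ F_Y^{-1}$ (and $f_Y \circ F_X^{-1}$) invariant under $z \mapsto 1-z$. Without concentricity, the two terms inside $|\,\cdot + \cdot - 2\,|$ are not forced to share a sign, and the triangle-inequality step in Lemma \ref{pseudoTVDineq} would generically be strict.
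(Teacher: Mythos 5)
Your proposal is correct and follows essentially the same route as the paper: both arguments reduce to showing that concentric symmetry forces $f_X(F_Y^{-1}(1-z)) = f_X(F_Y^{-1}(z))$ and $f_Y(F_Y^{-1}(1-z)) = f_Y(F_Y^{-1}(z))$, so the two terms in the triangle-inequality step of Lemma \ref{pseudoTVDineq} have the same sign and that step becomes an equality. The only cosmetic difference is that you route through the stated conditions \eqref{eqCondition1}--\eqref{eqCondition2} (which collapse to perfect squares), while the paper observes directly that the two density ratios are equal and recomputes the integral.
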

\begin{proof}
Without loss of generality, we consider only the inequality $TVD(Q_P^{HD},U)\leq TVD(P,Q)$ in Lemma \ref{pseudoTVDineq}. This equality holds in this inequality if and only if $\forall z\in[0,\frac{1}{2}]$, if $f_X(F_Y^{-1}(z))\geq f_Y(F_Y^{-1}(z))$ then $f_X(F_Y^{-1}(1-z))\geq f_Y(F_Y^{-1}(z))$ also, and vice versa. Under assumptions that $P,Q$ are symmetric and concentric around some center $x^*$, $\forall z\in[0,\frac{1}{2}]$ we have $F_Y^{-1}(z)+F_Y^{-1}(1-z) = 2x^*$. By the same definitions: $f_X(F_Y^{-1}(z)) = f_X(F_Y^{-1}(1-z))$ and $f_Y(F_Y^{-1}(z)) = f_Y(F_Y^{-1}(1-z))$. This implies that $\frac{f_X(F_Y^{-1}(z))}{f_Y(F_Y^{-1}(z))} = \frac{f_X(F_Y^{-1}(1-z))}{f_Y(F_Y^{-1}(1-z))},\forall z\in[0,\frac{1}{2}]$. Therefore,
$$TVD(Q_P^{HD},U) = \frac{1}{2}\int_{0}^{\frac{1}{2}} |\frac{f_X(F_Y^{-1}(z))}{f_Y(F_Y^{-1}(z))}-1+\frac{f_X(F_Y^{-1}(1-z))}{f_Y(F_Y^{-1}(1-z))}-1|dz$$
$$= \frac{1}{2}\int_{0}^{\frac{1}{2}} (|\frac{f_X(F_Y^{-1}(z))}{f_Y(F_Y^{-1}(z))}-1|+|\frac{f_X(F_Y^{-1}(1-z))}{f_Y(F_Y^{-1}(1-z))}-1|)dz = \frac{1}{2}\int_{0}^{1} |\frac{f_X(F_Y^{-1}(z))}{f_Y(F_Y^{-1}(z))}-1|dz = TVD(P,Q)$$
Similarly, $TVD(P_Q^{HD},U) = TVD(P,Q)$ under the same assumptions. 
This completes the proof.
\end{proof}

Lemma \ref{concentricEq} shows that under symmetric and concentric assumptions, $$\tilde{TVD}^{HD}(P||Q) = \tilde{TVD}^{HD}(Q||P) = \tilde{TVD}^{HD}_S(P,Q) = TVD(P,Q)$$ 

We make an additional remark that the induced divergences are location and scale invariant in the sense that if random variables $X,Y$ are replaced by $aX+b,aY+b$ for some constants $a\neq0,b$, and their probability distributions denoted by $\tilde{P},\tilde{Q}$, then the distribution of random variables $HD(aX+b;\tilde{Q}),HD(aY+b;\tilde{P})$ is the same as those of $HD(X;Q),HD(Y;P)$ respectively, and $HD(aX+b;\tilde{P}),HD(aY+b;\tilde{Q})$ are still distributed as uniform distributions $U(0,\frac{1}{2})$. Hence by definition $\tilde{D}^{HD}(P||Q) = \tilde{D}^{HD}(\tilde{P}||\tilde{Q}), \tilde{D}^{HD}(Q||P) = \tilde{D}^{HD}(\tilde{Q}||\tilde{P})$. This property is desirable for all divergence functions, which is satisfied by f-divergences such as TVD. We formalize this result as Lemma \ref{invariance}.
\begin{lemma}\label{invariance}
For given constants $a\neq0,b$, and random variables $X\sim P,Y\sim Q$, denote the transformed distributions by $aX+b\sim\tilde{P},aY+b\sim\tilde{Q}$. Then $\tilde{D}^{HD}(\tilde{P}||\tilde{Q}) = \tilde{D}^{HD}(P||Q)$ for any choice of divergence function $D$ in the induced divergence.
\end{lemma}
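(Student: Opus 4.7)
The plan is to reduce the statement to the invariance of the halfspace depth random variables themselves. By the definition $\tilde{D}^{HD}(P\|Q)=D(Q_P^{HD}\|U)$ and the analogous definition for $\tilde{D}^{HD}(\tilde{P}\|\tilde{Q})$, it suffices to prove that the distribution of $HD(aX+b;\tilde{Q})$ equals the distribution of $HD(X;Q)$; then $(\tilde{Q})_{\tilde{P}}^{HD}=Q_P^{HD}$ and equality of the induced divergences follows from the fact that $D(\cdot\|U)$ is a functional of the probability distribution only. (The uniform reference $U$ is the same in both expressions thanks to Section~2.)

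First I would compute the CDF of the affine pushforward. Let $\tilde{F}_X(t)=\mathbb{P}(aX+b\le t)$ and $\tilde{F}_Y(t)=\mathbb{P}(aY+b\le t)$, and split on the sign of $a$. If $a>0$, then $\tilde{F}_Y(t)=F_Y((t-b)/a)$, so evaluating at $t=aX+b$ gives $\tilde{F}_Y(aX+b)=F_Y(X)$. If $a<0$, continuity of $Q$ gives $\tilde{F}_Y(t)=1-F_Y((t-b)/a)$, hence $\tilde{F}_Y(aX+b)=1-F_Y(X)$.

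Next I would plug into the halfspace depth. In the $a>0$ case,
\[
HD(aX+b;\tilde{Q})=\min\{\tilde{F}_Y(aX+b),\,1-\tilde{F}_Y(aX+b)\}=\min\{F_Y(X),1-F_Y(X)\}=HD(X;Q),
\]
and in the $a<0$ case the same identity holds because $\min\{s,1-s\}=\min\{1-s,s\}$. Thus $HD(aX+b;\tilde{Q})$ equals $HD(X;Q)$ pointwise as random variables (not merely in distribution), so the laws agree and $(\tilde{Q})_{\tilde{P}}^{HD}=Q_P^{HD}$. Applying $D(\cdot\|U)$ yields $\tilde{D}^{HD}(\tilde{P}\|\tilde{Q})=\tilde{D}^{HD}(P\|Q)$, and the symmetric argument for $HD(aY+b;\tilde{P})$ gives $\tilde{D}^{HD}(\tilde{Q}\|\tilde{P})=\tilde{D}^{HD}(Q\|P)$.

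The only subtlety, and the one obstacle worth highlighting, is the case $a<0$: the CDF of the pushforward flips, and one must use continuity of the distributions to drop the point mass at the boundary, after which the invariance of $\min\{s,1-s\}$ under $s\mapsto 1-s$ closes the argument. Everything else is direct substitution, and no properties of $D$ beyond being a function of the two argument distributions are used, so the result applies to any divergence (in particular to any $f$-divergence such as TVD).
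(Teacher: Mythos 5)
Your proof is correct and follows essentially the same route as the paper: reduce to showing $HD(aX+b;\tilde{Q})$ and $HD(X;Q)$ have the same law, split on the sign of $a$, and use the flip $\tilde{F}_Y(aX+b)=1-F_Y(X)$ for $a<0$ together with the symmetry of $\min\{s,1-s\}$. The only (minor) difference is that you obtain pointwise equality of the depth random variables rather than just equality of their CDFs, which is a slightly cleaner way to reach the same conclusion.
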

\begin{proof}
Without loss of generality, for a given divergence $D$, we only need to show that $\tilde{D}^{HD}(\tilde{P}||\tilde{Q}) = D(\tilde{Q}^{HD}_{\tilde{P}}||U) = D(Q_P^{HD}||U)$. This amounts to showing that $HD(X;Q)$ and $HD(aX+b;\tilde{Q})$ are identically distributed random variables. We consider the CDF of $HD(aX+b;\tilde{Q})$, $\forall z\in[0,\frac{1}{2}]$:
$$\mathbb{P}(HD(aX+b;\tilde{Q})\leq z) = \mathbb{P}(F_{aY+b}(aX+b)\leq z)+\mathbb{P}(F_{aY+b}(aX+b)\geq 1-z)$$
When $a>0$, $\mathbb{P}(HD(aX+b;\tilde{Q})\leq z) = \mathbb{P}(F_{Y}(X)\leq z)+\mathbb{P}(F_{Y}(X)\geq 1-z) = \mathbb{P}(HD(X;Q)\leq z)$.
When $a<0$, $\mathbb{P}(HD(aX+b;\tilde{Q})\leq z) = \mathbb{P}(1-F_{Y}(X)\leq z)+\mathbb{P}(1-F_{Y}(X)\geq 1-z) =\mathbb{P}(F_{Y}(X)\leq z)+\mathbb{P}(F_{Y}(X)\geq 1-z)=\mathbb{P}(HD(X;Q)\leq z)$.

Combining both cases, we proved that $HD(aX+b;\tilde{Q})$ and $HD(X;Q)$ have the same distributions, which means $\tilde{Q}_{\tilde{P}}^{HD} = Q_P^{HD}$, hence the result.
\end{proof}

Trivially, the symmetrized induced divergences are also scale and location invariant: $\tilde{D}^{HD}_S(\tilde{P}||\tilde{Q}) = \tilde{D}^{HD}_S(P||Q)$. Lemma \ref{invariance} implies in particular that the (symmetrized) induced TVD ($\tilde{TVD}^{HD}$) is scale and location invariant: $\tilde{TVD}^{HD}(\tilde{P}||\tilde{Q})= \tilde{TVD}^{HD}(P||Q), \tilde{TVD}^{HD}(\tilde{Q}||\tilde{P}) = \tilde{TVD}^{HD}(Q||P)$, and $\tilde{TVD}^{HD}_S(\tilde{P},\tilde{Q}) = \tilde{TVD}^{HD}_S(P,Q)$, similar to the TVD itself which satisfies $TVD(\tilde{P},\tilde{Q}) = TVD(P,Q)$.
\subsection{Divergence under Quantile Transformation}
Closely related to the halfspace depth function transformation is the idea of transforming based on the quantile function which we denote by $QT(X;Q) = F^{-1}_{Y}(X)$, $QT(Y;P) = F^{-1}_X(Y)$, and $QT(X;P)$, $QT(Y;Q)\sim U(0,1)$ are both uniform random variables between $[0,1]$. Notice that random variables $QT(X;Q),QT(Y;P)$ have CDFs:
$$\mathbb{P}(QT(X;Q)\leq z) = F_X(F^{-1}_Y(z)),\mathbb{P}(QT(Y;P)\leq z) = F_Y(F^{-1}_X(z))$$
Following the same notation in section 5.1, we denote their probability distributions by $Q^{QT}_P,P^{QT}_Q$ respectively. Following the same definition for depth-induced divergences based on a divergence function $D$, we can define quantile-induced divergences as $\tilde{D}^{QT}(P||Q) = D(Q^{QT}_P||U),\tilde{D}^{QT}(Q||P) = D(P^{QT}_Q||U)$, where $U$ is the probability distribution of $U(0,1)$. Results in section 5.1 can be seen as a generalization of the results that also apply to the quantile transformation. We especially remark that when the divergence is TVD, we obtain the equality in Lemma \ref{pseudoTVDineq} without having to make any assumptions on the two continuous probability distributions $P,Q$.
\begin{lemma}\label{qtTVDeq}
For a pair of continuous probability distributions $P,Q$, consider quantile transformed random variables with respective distributions $QT(Y;P)\sim P_Q^{QT},QT(X;Q)\sim Q_P^{QT}$, and let $U$ denote the distribution of a uniform random variable $U(0,1)$. Then $TVD(P_Q^{QT},U)= TVD(P,Q)=TVD(Q_P^{QT},U)$. 
\end{lemma}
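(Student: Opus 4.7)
The plan is to mirror the structure of the proof of Lemma \ref{pseudoTVDineq}, but to exploit the fact that the quantile transformation gives a single-term density on $[0,1]$ rather than the two-term folded density that HD produces on $[0,\tfrac12]$. As a result, the triangle-inequality step that made Lemma \ref{pseudoTVDineq} an inequality will be absent, and the whole argument will collapse to an equality.

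First, I would derive the density of $Q_P^{QT}$ by differentiating the CDF $\mathbb{P}(QT(X;Q)\leq z) = F_X(F_Y^{-1}(z))$ stated just above the lemma. By the chain rule, this yields the density $f_{Q_P^{QT}}(z) = \frac{f_X(F_Y^{-1}(z))}{f_Y(F_Y^{-1}(z))}$ for $z\in[0,1]$. The uniform distribution $U$ on $[0,1]$ has density identically equal to $1$, so
\[
TVD(Q_P^{QT},U) = \frac{1}{2}\int_0^1\Bigl|\frac{f_X(F_Y^{-1}(z))}{f_Y(F_Y^{-1}(z))}-1\Bigr|\,dz.
\]

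Next, I would apply the same change of variables used at the end of the proof of Lemma \ref{pseudoTVDineq}: set $y=F_Y^{-1}(z)$, so $dz=f_Y(y)\,dy$, and $y$ ranges over $(-\infty,\infty)$ as $z$ ranges over $[0,1]$ (continuity of $Q$ guarantees this). Multiplying inside the absolute value by $f_Y(y)$ converts the integrand into $|f_X(y)-f_Y(y)|$, giving
\[
TVD(Q_P^{QT},U)=\frac{1}{2}\int_{-\infty}^{\infty}|f_X(y)-f_Y(y)|\,dy = TVD(P,Q).
\]
The identity $TVD(P_Q^{QT},U)=TVD(P,Q)$ follows by the symmetric argument, swapping the roles of $X,Y$ and $P,Q$.

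There is no real obstacle here; the main conceptual point to flag in the write-up is the contrast with Lemma \ref{pseudoTVDineq}. In that lemma the HD density was the sum of two reflected terms on $[0,\tfrac12]$, forcing a triangle-inequality bound and therefore producing only $\leq TVD(P,Q)$; in the QT case the density is a single term on the full interval $[0,1]$, so no triangle inequality is invoked and the equality is exact for arbitrary continuous $P,Q$, without the symmetry/concentricity hypothesis needed in Lemma \ref{concentricEq}.
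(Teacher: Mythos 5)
Your proposal is correct and follows essentially the same route as the paper's proof: compute the single-term density $\frac{f_X(F_Y^{-1}(z))}{f_Y(F_Y^{-1}(z))}$ of $Q_P^{QT}$ on $[0,1]$, write out the TVD against the uniform density, and apply the change of variables $y=F_Y^{-1}(z)$ to recover $TVD(P,Q)$ exactly, with no triangle-inequality step. The contrast you draw with the two-term folded density in Lemma \ref{pseudoTVDineq} is precisely the point the paper makes as well.
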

\begin{proof}
The proof is identical to that of Lemma \ref{pseudoTVDineq} except we don't have a triangular inequality, and the equality follows from,
\[TVD(Q^{QT}_P,U) = \frac{1}{2}\int_{0}^{1} |\frac{f_X(F^{-1}_Y(z))}{f_Y(F^{-1}_Y(z))} - 1|dz = \frac{1}{2}\int_{-\infty}^{\infty}|f_X(y)-f_Y(y)|dy = TVD(P,Q)\]
where the last step applies the change of variables $y = F^{-1}_Y(z)$ similarly. Vice versa we have $TVD(P^{QT}_Q,U) = TVD(Q,P) = TVD(P,Q)$.
\end{proof}

\begin{lemma}\label{invarianceQT}
For given constants $a\neq0,b$, and random variables $X\sim P,Y\sim Q$, denote the transformed distributions by $aX+b\sim\tilde{P},aY+b\sim\tilde{Q}$. Then $\tilde{D}_{\phi}^{QT}(\tilde{P}||\tilde{Q}) = \tilde{D}_{\phi}^{QT}(P||Q)$ for any choice of f-divergence function $D_\phi$ in the induced divergence.
\end{lemma}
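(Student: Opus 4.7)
\medskip
\noindent\textbf{Proof plan.}
The plan is to mirror the argument of Lemma \ref{invariance}: establish that $QT(aX+b;\tilde{Q})$ is either equal in distribution to $QT(X;Q)$, or is related to it by a symmetry that the uniform density on $[0,1]$ absorbs under any f-divergence. Without loss of generality I would only establish $D_\phi(\tilde{Q}_{\tilde{P}}^{QT}\|U) = D_\phi(Q_P^{QT}\|U)$; the symmetric direction and the symmetrized version follow immediately.

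The first step is to identify the law of $QT(aX+b;\tilde{Q}) = F_{\tilde{Y}}(aX+b)$ from the affine relationship between $\tilde{Y}=aY+b$ and $Y$. For $a>0$ one gets $F_{\tilde{Y}}(t) = F_Y((t-b)/a)$, so $F_{\tilde{Y}}(aX+b) = F_Y(X) = QT(X;Q)$: the two random variables coincide pathwise, and the induced distributions are literally identical, making the claim trivial.

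The main obstacle is the case $a<0$. Here continuity of $Y$ gives $F_{\tilde{Y}}(t) = 1 - F_Y((t-b)/a)$, so $QT(aX+b;\tilde{Q}) = 1 - F_Y(X) = 1 - QT(X;Q)$. Unlike in Lemma \ref{invariance}, where the $\min$ inside $HD$ absorbs the reflection $z\mapsto 1-z$ and yields an identically distributed random variable, in the quantile case this reflection survives. This is precisely why the present lemma must be restricted to f-divergences rather than arbitrary divergences.

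The final step exploits the reflection symmetry of the uniform density on $[0,1]$. If $f_W$ denotes the density of $Q_P^{QT}$, then $\tilde{Q}_{\tilde{P}}^{QT}$ has density $z\mapsto f_W(1-z)$. Because $u(z)\equiv 1$ is itself invariant under $z\mapsto 1-z$, the change of variables $w=1-z$ in
$$D_\phi(\tilde{Q}_{\tilde{P}}^{QT}\|U) = \int_0^1 \phi\bigl(f_W(1-z)\bigr)\,dz$$
returns $\int_0^1 \phi(f_W(w))\,dw = D_\phi(Q_P^{QT}\|U)$. Combining the two sign cases completes the argument.
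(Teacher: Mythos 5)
Your proposal is correct and follows essentially the same route as the paper: split on the sign of $a$, note that for $a>0$ the transformed quantile variable coincides with $QT(X;Q)$ while for $a<0$ it equals $1-QT(X;Q)$, and then absorb the reflection $z\mapsto 1-z$ using the symmetry of the uniform density under any f-divergence. Your added remark that this reflection is exactly why the lemma is stated for f-divergences (in contrast with Lemma \ref{invariance}, where the $\min$ in $HD$ already absorbs it) is a correct and worthwhile observation, but the argument itself matches the paper's.
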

\begin{proof}
Without loss of generality, for a given f-divergence $D_\phi$, we only need to show that $\tilde{D}_{\phi}^{QT}(\tilde{P}||\tilde{Q}) = D_{\phi}(\tilde{Q}^{QT}_{\tilde{P}}||U) = D_{\phi}(Q_P^{QT}||U)$. We consider the CDF of $QT(aX+b;\tilde{Q})$, $\forall z\in[0,1]$:
$$\mathbb{P}(QT(aX+b;\tilde{Q})\leq z) = \mathbb{P}(F_{aY+b}(aX+b)\leq z)$$

When $a>0$, $\mathbb{P}(QT(aX+b;\tilde{Q})\leq z) = \mathbb{P}(F_{Y}(X)\leq z) = \mathbb{P}(QT(X;Q)\leq z)$. So $\tilde{Q}^{QT}_{\tilde{P}}$ has the same probability distribution as $Q^{QT}_P$, hence the result.

When $a<0$, $\mathbb{P}(QT(aX+b;\tilde{Q})\leq z) = \mathbb{P}(F_{Y}(X)\geq 1-z) = 1-\mathbb{P}(F_{Y}(X)\leq 1-z) = 1-\mathbb{P}(QT(X;Q)\leq 1-z)$. Writing the random variables $V\sim QT(aX+b;\tilde{Q}),W\sim QT(X;Q)$ and their respective probability densities as $f_V,f_W$ where we have $f_V(z) = f_W(1-z),\forall z\in[0,1]$, we obtain (where $U$ has density $u(z)=1,\forall z\in[0,1]$):
\[D_\phi(\tilde{Q}^{QT}_{\tilde{P}}||U) = \int_{0}^{1} \phi(\frac{f_V(z)}{u(z)})u(z)dz = \int_{0}^{1} \phi(\frac{f_W(1-z)}{u(1-z)})u(1-z)d(1-z) =D_\phi(Q^{QT}_{P}||U)\]

Combining both cases, we proved that $\tilde{D}_{\phi}^{QT}(\tilde{P}||\tilde{Q}) = \tilde{D}_{\phi}^{QT}(P||Q)$ for any choice of f-divergence function $D_\phi$ in the induced divergence.
\end{proof}

TVD belongs to the family of f-divergences, and the result in Lemma \ref{invarianceQT} applies directly.

\subsection{Divergence induced by Simplicial Depth Function}

Let random variable $Z$ be that $2Z\sim Beta(1,\frac{1}{2})$ and denote the probability distribution of $Z$ by $R$ supported on the domain $z\in[0,\frac{1}{2}]$. The density of $R$ is $r(z) = f_Z(z) = \frac{1}{1-2z},\forall z\in[0,\frac{1}{2}]$ as shown in section 3.  Following notations at the beginning of this section, we consider probability distributions for simplicial depth random variables $SD(X;Q)$ and $SD(Y;P)$, where we know that $SD(X;P),SD(Y;Q)\sim R$ are both identically distributed as $Z$. It can be shown that:
$$\mathbb{P}(SD(Y;P)\leq z) = \mathbb{P}(F_X(Y)\geq \sqrt{\frac{1}{4}-\frac{z}{2}}+\frac{1}{2}) + \mathbb{P}(F_X(Y)\leq \frac{1}{2}-\sqrt{\frac{1}{4}-\frac{z}{2}})$$
$$= \mathbb{P}(Y\geq F_X^{-1}(\sqrt{\frac{1}{4}-\frac{z}{2}}+\frac{1}{2}))+\mathbb{P}(Y\leq F_X^{-1}(\frac{1}{2}-\sqrt{\frac{1}{4}-\frac{z}{2}}))$$
$$= 1-F_Y(F_X^{-1}(\sqrt{\frac{1}{4}-\frac{z}{2}}+\frac{1}{2}))+ F_Y(F_X^{-1}(\frac{1}{2}-\sqrt{\frac{1}{4}-\frac{z}{2}}))$$
$\forall z\in[0,\frac{1}{2}]$. Similarly,
$$\mathbb{P}(SD(X;Q)\leq z) =1-F_X(F_Y^{-1}(\sqrt{\frac{1}{4}-\frac{z}{2}}+\frac{1}{2}))+ F_X(F_Y^{-1}(\frac{1}{2}-\sqrt{\frac{1}{4}-\frac{z}{2}}))$$
Denote the probability distributions defined by CDFs above for random variables $V=SD(Y;P)$, $W=SD(X;Q)$ as $P_Q^{SD},Q_P^{SD}$ respectively. We write $F_V(z) = 1-F_Y(F_X^{-1}(\sqrt{\frac{1}{4}-\frac{z}{2}}+\frac{1}{2}))+ F_Y(F_X^{-1}(\frac{1}{2}-\sqrt{\frac{1}{4}-\frac{z}{2}}))$, $F_W(z) = 1-F_X(F_Y^{-1}(\sqrt{\frac{1}{4}-\frac{z}{2}}+\frac{1}{2}))+ F_X(F_Y^{-1}(\frac{1}{2}-\sqrt{\frac{1}{4}-\frac{z}{2}})),\forall z\in[0,\frac{1}{2}]$. It can also be shown by taking derivatives that the PDFs are given by $\forall z\in[0,\frac{1}{2}]$:
$$f_V(z) = \frac{1}{2\sqrt{1-2z}}(\frac{f_Y(F_X^{-1}(\frac{1}{2}-\sqrt{\frac{1}{4}-\frac{z}{2}}))}{f_X(F_X^{-1}(\frac{1}{2}-\sqrt{\frac{1}{4}-\frac{z}{2}}))} + \frac{f_Y(F_X^{-1}(\sqrt{\frac{1}{4}-\frac{z}{2}}+\frac{1}{2}))}{f_X(F_X^{-1}(\sqrt{\frac{1}{4}-\frac{z}{2}}+\frac{1}{2}))})$$
$$f_W(z) = \frac{1}{2\sqrt{1-2z}}(\frac{f_X(F_Y^{-1}(\frac{1}{2}-\sqrt{\frac{1}{4}-\frac{z}{2}}))}{f_Y(F_Y^{-1}(\frac{1}{2}-\sqrt{\frac{1}{4}-\frac{z}{2}}))} + \frac{f_X(F_Y^{-1}(\sqrt{\frac{1}{4}-\frac{z}{2}}+\frac{1}{2}))}{f_Y(F_Y^{-1}(\sqrt{\frac{1}{4}-\frac{z}{2}}+\frac{1}{2}))})$$
Consider the following divergence functions between probability distributions:
$D(P_Q^{SD}||R),D(Q_P^{SD}||R)$. Obviously when $P\neq Q$, the divergence is going to be greater than 0: $D(P_Q^{SD}||R)>0,D(Q_P^{SD}||R)>0$, and when $P=Q$, $D(P_Q^{SD}||R) = 0$. We can use the divergence between these inter-distribution statistical depth probability distributions to proxy the divergence between the original distributions $P,Q$, so this results in the following induced divergence:
$$\tilde{D}^{SD}(P||Q) = D(Q_P^{SD}||R),\tilde{D}^{SD}(Q||P) = D(P_Q^{SD}||R)$$
Similarly, we can define the symmetrized divergence as,
$$\tilde{D}^{SD}_S(P,Q) = \frac{\tilde{D}^{SD}(P||Q)+\tilde{D}^{SD}(Q||P)}{2}$$
The symmetrized divergence satisfies also the property that it is zero only when $P=Q$ and otherwise greater than zero. We next show an inequality between the simplicial depth-induced TVD distance and the original TVD distance between two general distributions $P,Q$. The properties of the simplicial depth-induced divergences are similar to those of the halfspace depth case.
\begin{lemma}\label{SDpseudoTVDineq}
For a pair of continuous probability distributions $P,Q$, consider simplicial depth random variables with respective distributions $SD(Y;P)\sim P_Q^{SD},SD(X;Q)\sim Q_P^{SD}$, and let $R$ denote a probability distribution with density $r(z) = \frac{1}{\sqrt{1-2z}},\forall z\in[0,\frac{1}{2}]$. Then $TVD(P_Q^{SD},R)\leq TVD(P,Q)$ and $TVD(Q_P^{SD},R)\leq TVD(P,Q)$.
\end{lemma}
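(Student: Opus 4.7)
The plan is to mirror the proof of Lemma \ref{pseudoTVDineq} almost verbatim: use the explicit density $f_W$ given just above, apply the triangle inequality to split the integrand across the two summands, combine the two halves on $[0,\tfrac12]$ into a single integral on $[0,1]$ via the substitution that inverts $z = 2\alpha(1-\alpha)$, and then change variables through $F_Y^{-1}$ to recover $\mathrm{TVD}(P,Q)$. By symmetry it suffices to prove $\mathrm{TVD}(Q_P^{SD},R)\le \mathrm{TVD}(P,Q)$.

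First I would write $r(z) = \tfrac{1}{\sqrt{1-2z}} = \tfrac{1}{2\sqrt{1-2z}}\cdot 2$ so that $r$ and $f_W$ share the common prefactor $\tfrac{1}{2\sqrt{1-2z}}$. Pulling this factor out of the absolute value and applying the triangle inequality gives
\[\mathrm{TVD}(Q_P^{SD},R) \le \tfrac12 \int_0^{1/2} \tfrac{1}{2\sqrt{1-2z}}\Bigl(\bigl|\tfrac{f_X(F_Y^{-1}(\alpha(z)))}{f_Y(F_Y^{-1}(\alpha(z)))}-1\bigr| + \bigl|\tfrac{f_X(F_Y^{-1}(1-\alpha(z)))}{f_Y(F_Y^{-1}(1-\alpha(z)))}-1\bigr|\Bigr)\,dz,\]
where $\alpha(z) = \tfrac12 - \sqrt{\tfrac14-\tfrac{z}{2}}\in[0,\tfrac12]$ and $1-\alpha(z)\in[\tfrac12,1]$. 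The key calculation is the substitution $\alpha = \alpha(z)$: differentiating gives $d\alpha = \tfrac{dz}{2\sqrt{1-2z}}$, which is precisely the prefactor. The first summand then transforms into an integral of $|\tfrac{f_X(F_Y^{-1}(\alpha))}{f_Y(F_Y^{-1}(\alpha))}-1|$ over $\alpha\in[0,\tfrac12]$, and using $\beta = 1-\alpha$ the second summand transforms into the same integrand over $\beta\in[\tfrac12,1]$. Together they assemble into a single integral over $[0,1]$.

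Finally, the change of variables $y = F_Y^{-1}(\alpha)$, $d\alpha = f_Y(y)\,dy$, identifies this with $\tfrac12\int_{-\infty}^\infty |f_X(y)-f_Y(y)|\,dy = \mathrm{TVD}(P,Q)$, completing the inequality. The proof for $\mathrm{TVD}(P_Q^{SD},R)$ is identical after swapping the roles of $P$ and $Q$.

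I do not anticipate any real obstacle: the nontrivial ingredient is simply recognizing that the $\tfrac{1}{2\sqrt{1-2z}}$ factor that appears in $f_W$ and $r$ is exactly the Jacobian of the map $z\mapsto\alpha(z)$, so that after the triangle inequality the problem reduces to the same integral already handled in Lemma \ref{pseudoTVDineq}. The only mild care needed is tracking that $\alpha$ ranges over $[0,\tfrac12]$ while $1-\alpha$ ranges over $[\tfrac12,1]$, so that the two split pieces tile $[0,1]$ exactly once, with no missing or double-counted points.
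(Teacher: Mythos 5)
Your proposal is correct and follows essentially the same route as the paper's own proof: split $|f_W - r|$ by the triangle inequality after factoring out $\tfrac{1}{2\sqrt{1-2z}}$, recognize that prefactor as the Jacobian of $z\mapsto \tfrac12-\sqrt{\tfrac14-\tfrac{z}{2}}$, tile $[0,1]$ with the two resulting pieces, and finish with the substitution $y=F_Y^{-1}(\cdot)$ as in Lemma \ref{pseudoTVDineq}. The only cosmetic difference is that the paper substitutes $y=\sqrt{\tfrac14-\tfrac{z}{2}}$ first and then applies the triangle inequality, whereas you do these two steps in the opposite order.
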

\begin{proof}
Without loss of generality, we only need to show that $TVD(Q_P^{SD},R)\leq TVD(P,Q)$. From the definitions we know $Q_P^{SD}$ has density $f_W(z) = \frac{1}{2\sqrt{1-2z}}(\frac{f_X(F_Y^{-1}(\frac{1}{2}-\sqrt{\frac{1}{4}-\frac{z}{2}}))}{f_Y(F_Y^{-1}(\frac{1}{2}-\sqrt{\frac{1}{4}-\frac{z}{2}}))} + \frac{f_X(F_Y^{-1}(\sqrt{\frac{1}{4}-\frac{z}{2}}+\frac{1}{2}))}{f_Y(F_Y^{-1}(\sqrt{\frac{1}{4}-\frac{z}{2}}+\frac{1}{2}))})$ on the domain of support $z\in[0,\frac{1}{2}]$. Then,
$$TVD(Q_P^{SD},R) = \frac{1}{2}\int_{0}^{\frac{1}{2}} |f_W(z)-r(z)|dz$$
$$= \frac{1}{2}\int_{0}^{\frac{1}{2}}\frac{1}{2\sqrt{1-2z}}|\frac{f_X(F_Y^{-1}(\frac{1}{2}-\sqrt{\frac{1}{4}-\frac{z}{2}}))}{f_Y(F_Y^{-1}(\frac{1}{2}-\sqrt{\frac{1}{4}-\frac{z}{2}}))} + \frac{f_X(F_Y^{-1}(\sqrt{\frac{1}{4}-\frac{z}{2}}+\frac{1}{2}))}{f_Y(F_Y^{-1}(\sqrt{\frac{1}{4}-\frac{z}{2}}+\frac{1}{2}))}-2|dz$$
Here we make the change of variable $y = \sqrt{\frac{1}{4}-\frac{z}{2}}$ and $dy = \frac{-dz}{\sqrt{1-2z}}$. Then we have,
$$TVD(Q_P^{SD},R)\leq\frac{1}{2}\int_{0}^{\frac{1}{2}} (|\frac{f_X(F_Y^{-1}(\frac{1}{2}-y))}{f_Y(F_Y^{-1}(\frac{1}{2}-y))}-1| + |\frac{f_X(F_Y^{-1}(\frac{1}{2}+y))}{f_Y(F_Y^{-1}(\frac{1}{2}+y))}-1|)dy = \frac{1}{2}\int_{0}^{1} |\frac{f_X(F_Y^{-1}(y))}{f_Y(F_Y^{-1}(y))}-1|dy$$
By another change of variables $x = F_Y^{-1}(y)$ which for general continuous distributions $P,Q$ and $y\in[0,1]$ have domain of support $x\in(-\infty,\infty)$ and $dx = \frac{dy}{f_Y(F_Y^{-1}(y))}$, we can rewrite inequality above as (where $f_X,f_Y$ are densities of distributions $P,Q$ respectively),
$$TVD(Q_P^{SD},R)\leq \frac{1}{2}\int_{0}^{1} |\frac{f_X(F_Y^{-1}(y))}{f_Y(F_Y^{-1}(y))}-1|dy = \frac{1}{2}\int_{-\infty}^{\infty} |f_X(x)-f_Y(x)|dx = TVD(P,Q)$$
Similarly $TVD(P_Q^{SD},R)\leq TVD(P,Q)$. This completes the proof.
\end{proof}

Similar to Lemma \ref{concentricEq}, we can establish the equality:
$$\tilde{TVD}^{SD}(P||Q) = \tilde{TVD}^{SD}(Q||P) = \tilde{TVD}_S^{SD}(P,Q) = TVD(P,Q)$$ 
in the inequalities from Lemma \ref{SDpseudoTVDineq} under specific assumptions \eqref{eqCondition1} and \eqref{eqCondition2}, including the special case when $P,Q$ are symmetric and concentric. Similar to Lemma \ref{invariance}, we can also establish the location and scale invariance of $\tilde{D}^{SD}(P||Q),\tilde{D}^{SD}(Q||P)$,and trivially $\tilde{D}^{SD}_S(P,Q)$. These are established in the following lemmas where the proof can be simply derived from that of Lemma \ref{concentricEq} and \ref{invariance}.
\begin{lemma}\label{SDconcentricEq}
For two symmetric and concentric continuous distributions $P,Q$, under the same definitions with Lemma \ref{SDpseudoTVDineq}, we have:
$$TVD(P_Q^{SD},R) =TVD(Q_P^{SD},R) = TVD(P,Q)$$
\end{lemma}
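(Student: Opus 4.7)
The plan is to mirror the proof of Lemma \ref{concentricEq}, adapting the argument to the simplicial-depth density derived in Lemma \ref{SDpseudoTVDineq}. The only place where inequality enters that proof is the triangle inequality $|a+b-2|\le|a-1|+|b-1|$ applied to the two ratios that make up $f_W(z)$. So the goal is to show that under the symmetric-and-concentric hypothesis these two ratios are actually equal for every $z\in[0,\tfrac{1}{2}]$, which upgrades the triangle inequality to an equality and lets the rest of the proof of Lemma \ref{SDpseudoTVDineq} go through unchanged.

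The key observation is that the two arguments of $F_Y^{-1}$ appearing inside $f_W(z)$, namely $\tfrac{1}{2}-\sqrt{\tfrac{1}{4}-\tfrac{z}{2}}$ and $\tfrac{1}{2}+\sqrt{\tfrac{1}{4}-\tfrac{z}{2}}$, sum to $1$ for every $z\in[0,\tfrac{1}{2}]$. Setting $u_z=\tfrac{1}{2}-\sqrt{\tfrac{1}{4}-\tfrac{z}{2}}$, the symmetry of $Q$ about the common center $x^*$ gives $F_Y^{-1}(u_z)+F_Y^{-1}(1-u_z)=2x^*$, so $F_Y^{-1}(1-u_z)$ is the reflection of $F_Y^{-1}(u_z)$ through $x^*$. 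Since both $f_X$ and $f_Y$ are symmetric about $x^*$, this immediately yields $f_X(F_Y^{-1}(u_z))=f_X(F_Y^{-1}(1-u_z))$ and $f_Y(F_Y^{-1}(u_z))=f_Y(F_Y^{-1}(1-u_z))$, hence
\[
\frac{f_X(F_Y^{-1}(u_z))}{f_Y(F_Y^{-1}(u_z))}=\frac{f_X(F_Y^{-1}(1-u_z))}{f_Y(F_Y^{-1}(1-u_z))}.
\]

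With this equality in hand, the expression $|f_W(z)-r(z)|$ collapses to $\tfrac{1}{\sqrt{1-2z}}\,|\rho(z)-1|$ where $\rho(z)$ denotes the common value of the two ratios, so the step in Lemma \ref{SDpseudoTVDineq} that introduced the triangle inequality now holds with equality. Executing the same two changes of variable used there, first $y=\sqrt{\tfrac{1}{4}-\tfrac{z}{2}}$ to fold the two half-integrals into a single integral on $[0,1]$, then $x=F_Y^{-1}(y)$ to return to the original variable, produces $TVD(Q_P^{SD},R)=\tfrac{1}{2}\int_{-\infty}^{\infty}|f_X(x)-f_Y(x)|\,dx=TVD(P,Q)$. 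A symmetric argument, interchanging the roles of $P$ and $Q$ (so that we apply the symmetry of $P$ about $x^*$ to the arguments of $F_X^{-1}$), gives $TVD(P_Q^{SD},R)=TVD(P,Q)$.

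The hard part is essentially bookkeeping: the SD densities carry cumbersome nested square roots, so care is needed to track the arguments of $F_Y^{-1}$ and to verify that the pair $\bigl(\tfrac{1}{2}-\sqrt{\tfrac{1}{4}-\tfrac{z}{2}},\tfrac{1}{2}+\sqrt{\tfrac{1}{4}-\tfrac{z}{2}}\bigr)$ plays exactly the role that the pair $(z,1-z)$ played in Lemma \ref{concentricEq}. There is no genuinely new mathematical content; the reason this works is that the simplicial depth function is itself symmetric in $F(x)$ about $\tfrac{1}{2}$, so the same reflection-about-the-center trick that produced equality in the halfspace case does so here.
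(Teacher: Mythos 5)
Your proof is correct and is exactly the argument the paper intends: the paper omits an explicit proof of Lemma \ref{SDconcentricEq}, stating only that it ``can be simply derived from that of Lemma \ref{concentricEq},'' and your derivation---observing that the two arguments $\tfrac{1}{2}\mp\sqrt{\tfrac{1}{4}-\tfrac{z}{2}}$ sum to $1$, so symmetry and concentricity force the two density ratios in $f_W(z)$ to coincide and turn the triangle inequality of Lemma \ref{SDpseudoTVDineq} into an equality---is precisely that derivation. The subsequent changes of variable are carried out correctly, so nothing is missing.
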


\begin{lemma}\label{SDinvariance}
For given constants $a\neq0,b$, and random variables $X\sim P,Y\sim Q$, denote the transformed distributions by $aX+b\sim\tilde{P},aY+b\sim\tilde{Q}$. Then $\tilde{D}^{SD}(\tilde{P}||\tilde{Q}) = \tilde{D}^{SD}(P||Q)$ for any choice of divergence function $D$ in the induced divergence.
\end{lemma}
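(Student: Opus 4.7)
The plan is to mirror the proof of Lemma \ref{invariance}, exploiting the closed-form expression $SD(x;R) = 2F_R(x)(1-F_R(x))$ for a continuous univariate distribution $R$. It suffices, without loss of generality, to show that the random variables $SD(aX+b;\tilde{Q})$ and $SD(X;Q)$ have the same distribution, since the same argument with the roles of $P,Q$ and $X,Y$ swapped handles $\tilde{D}^{SD}(\tilde{Q}||\tilde{P}) = \tilde{D}^{SD}(Q||P)$, and the symmetrized version then follows trivially.

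First I would compute $F_{aY+b}(aX+b)$ in the two sign cases. When $a>0$, $F_{aY+b}(t) = F_Y((t-b)/a)$, hence $F_{aY+b}(aX+b) = F_Y(X)$. When $a<0$, the monotonicity flips and continuity of $Q$ gives $F_{aY+b}(aX+b) = 1-F_Y(X)$. Substituting into $SD(\cdot;\tilde{Q}) = 2F_{aY+b}(\cdot)(1-F_{aY+b}(\cdot))$ yields in both cases
\[
SD(aX+b;\tilde{Q}) = 2F_Y(X)(1-F_Y(X)) = SD(X;Q),
\]
the equality holding pointwise, not just in distribution. The collapse of the two cases into one is exactly due to the symmetry $u\mapsto 2u(1-u) = 2(1-u)u$ about $u=\tfrac{1}{2}$, which is the structural feature of the simplicial depth that makes the argument even cleaner than in the halfspace case.

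From the pointwise identity we conclude $\tilde{Q}^{SD}_{\tilde{P}} = Q^{SD}_P$ as probability distributions on $[0,\tfrac{1}{2}]$, and therefore $D(\tilde{Q}^{SD}_{\tilde{P}}||R) = D(Q^{SD}_P||R)$ for any divergence $D$. Repeating with $X,P$ and $Y,Q$ interchanged gives $\tilde{P}^{SD}_{\tilde{Q}} = P^{SD}_Q$, so $\tilde{D}^{SD}(\tilde{P}||\tilde{Q}) = \tilde{D}^{SD}(P||Q)$.

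There is no real obstacle; the proof is a streamlined copy of Lemma \ref{invariance}, and the symmetry of $2u(1-u)$ removes the need to examine the CDF of the depth variable as was done in the halfspace proof. The one subtlety worth a single sentence of care is the $a<0$ case: without continuity of $Q$, the step $F_{aY+b}(aX+b) = 1-F_Y(X)$ could fail at atoms of $Y$, but continuity is part of the hypothesis, so this causes no issue.
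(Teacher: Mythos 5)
Your proof is correct and is essentially the derivation the paper has in mind: the paper omits an explicit proof of this lemma, stating only that it follows from the argument of Lemma \ref{invariance}, and your case analysis on the sign of $a$ via $F_{aY+b}(aX+b) = F_Y(X)$ or $1-F_Y(X)$ is exactly that adaptation. Your observation that the symmetry of $u\mapsto 2u(1-u)$ yields the \emph{pointwise} identity $SD(aX+b;\tilde{Q}) = SD(X;Q)$ (rather than mere equality in distribution via the CDF, as in the halfspace proof) is a clean and valid shortcut.
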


\section{Numerical Results}
We demonstrate the behavior of the halfspace depth random variables and simplicial depth random variables in a simple numerical experiment and demonstrate how induced TVD can be estimated from the statistical depth distributions and provide a close estimate of the true TVD between two data distributions. An empirical Lipschitz variational TVD (LV-TVD) estimator is used to estimate a variational lower bound of the true TVD between two distributions based on data samples from them respectively, see \cite{DingICSTA23}. The estimator can be applied to estimating either $TVD(P,Q)$ from the original data samples, or $\tilde{TVD}^{HD}(P||Q),\tilde{TVD}^{HD}(Q||P)$ and $\tilde{TVD}^{SD}(P||Q),\tilde{TVD}^{SD}(Q||P)$ from depth function transformed samples, where the domain is bounded in $[0,\frac{1}{2}]$. It can also be applied to $\tilde{TVD}^{QT}(P||Q),\tilde{TVD}^{QT}(Q||P)$ from quantile transformed samples, where the domain is bounded in $[0,1]$, which should have an exact same target value as the ground-truth $TVD(P,Q)$.

Consider two symmetric and concentric Gaussian distributions $P = \mathcal{N}(0,1),Q = \mathcal{N}(0,1.5^2)$. Here the ground-truth TVD between $P,Q$ is $TVD(P,Q) = 0.19358$. Based on Lemma \ref{concentricEq} and \ref{SDconcentricEq}, we know that $\tilde{TVD}^{HD}(Q||P) = \tilde{TVD}^{HD}(P||Q) = \tilde{TVD}^{HD}_S(P,Q) = TVD(P,Q) =\tilde{TVD}^{SD}(Q||P) = \tilde{TVD}^{SD}(P||Q) = \tilde{TVD}^{SD}_S(P,Q)$ for this case. 
\begin{figure}[!ht]
    \centering
    \includegraphics[scale=0.5]{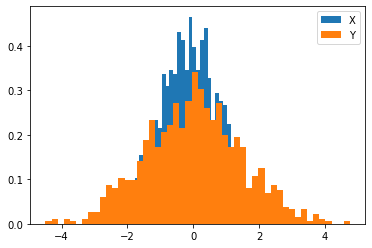}
    \caption{Data Sample $X$ from $P$ and $Y$ from $Q$}
    \label{fig:demo_data}
\end{figure}

We randomly sampled $N=1000$ points from $P$ and $Q$, denoted by data samples $\{X_i\}_{i=1}^N,\{Y_j\}_{j=1}^N$ which respectively define empirical distributions $P_N,Q_N$ that converges to $P,Q$ when $N\to\infty$. In Figure \ref{fig:demo_data}, the data samples are plotted for the two distributions. The transformed data samples $\{HD(Y_j;P_N)\}_{j=1}^N$ are empirical observations from the halfspace depth random variables $HD(Y;P)$, and similarly $\{HD(X_i;P_N)\}_{i=1}^N$ are empirical observations from $HD(X;P)$ which is known to be a uniform random variable $U(0,\frac{1}{2})$. For empirical distribution $P_N$, the halfspace depth observations are computed as $HD(x;P_N) = \frac{\min\{\sum_{i=1}^N \textbf{1}\{X_i\leq x\},\sum_{i=1}^N \textbf{1}\{X_i\geq x\},0.5\}}{N},\forall x\in\mathbb{R}$, which is always bounded between $[0,\frac{1}{2}]$. 

\begin{figure}[!ht]
    \centering
    \includegraphics[scale=0.5]{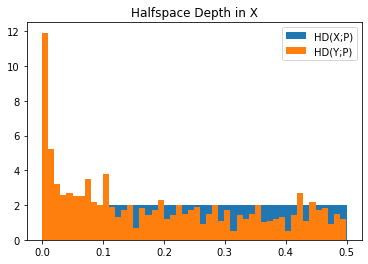}
    \caption{Halfspace Depth Data Samples for $X$ and $Y$ against Empirical Distribution of $P$}
    \label{fig:hdrvs}
\end{figure}

We plot in Figure \ref{fig:hdrvs} the empirical observations of halfspace depth random variables $HD(X;P)$ and $HD(Y;P)$, which shows that the samples $\{HD(X_i;P_N)\}_{i=1}^N$ ,corresponding to the random variable $HD(X;P)$, obviously follow a uniform distribution $U(0,\frac{1}{2})$. Notice that these random observations, for a fixed data size $N$ of $P_N$, corresponds to a discrete support that is shared by the two samples; therefore one would like to think that an empirical TVD computed from the discrete distribution based on the empirical samples can provide an estimate for the ground-truth TVD. However, this usually results in an overestimate. Hence, the LV-TVD estimator is preferred in this task. (A discrete TVD computation would correspond conceptually to $l=\infty$ in the LV-TVD estimator.)

Computing the empirical LV-TVD between $P_N,Q_N$ with Lipschitz parameter $l=4$ gives an estimate of $TVD(P,Q)$ around $0.19497$. The empirical LV-TVD computed using halfspace depth samples $\{HD(X_i;P_N)\}_{i=1}^N,\{HD(Y_j;P_N)\}_{j=1}^N$ gives an estimate of $TVD(P_Q^{HD},U) = \tilde{TVD}^{HD}(Q||P)$ which is $0.19220$ (using $l=20$). Vice versa we can estimate $TVD(Q_P^{HD},U) = \tilde{TVD}^{HD}(P||Q)$ with a similar procedure using samples $\{HD(X_i;Q_N)\}_{i=1}^N,\{HD(Y_j;Q_N)\}_{j=1}^N$ and the empirical LV-TVD estimate is $0.19028$. The estimate for symmetrized induced TVD $\tilde{TVD}^{HD}_S(P,Q)$ is hence $0.19124$. The estimate based on original data samples and based on transformed halfspace depth samples are fairly close, and both close to the real TVD value.

Similarly, Figure \ref{fig:sdrvs} shows the empirical observations of simplicial depth random variables $SD(X;P)$, $SD(Y;P)$, which shows that the samples $\{SD(X_i;P_N)\}_{i=1}^N$, corresponding to the random variable $SD(X;P)$, obviously follow a Beta distribution (on the half unit interval) $\frac{Beta(1,\frac{1}{2})}{2}$. 

For empirical distribution $P_N$, the simplicial depth observations are computed as $SD(x;P_N) = 2\frac{\sum_{i=1}^N \textbf{1}\{X_i\leq x\}}{N}\frac{\sum_{i=1}^N \textbf{1}\{X_i> x\}}{N}$, $\forall x\in\mathbb{R}$, which is always bounded between $[0,\frac{1}{2}]$.
\begin{figure}[!ht]
    \centering
    \includegraphics[scale=0.5]{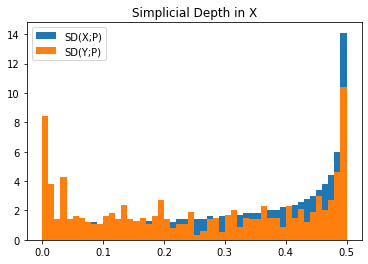}
    \caption{Simplicial Depth Data Samples for $X$ and $Y$ against Empirical Distribution of $P$}
    \label{fig:sdrvs}
\end{figure}

The LV-TVD estimate using the samples $\{SD(X_i;P_N)\}_{i=1}^N,\{SD(Y_j;P_N)\}_{j=1}^N$ (with $l=20$ also) gives 0.19562. Vice versa, the LV-TVD estimate using samples $\{SD(X_i;Q_N)\}_{i=1}^N,\{SD(Y_j;Q_N)\}_{j=1}^N$ (with $l=20$ also) gives 0.19215, which are also close to the estimate of TVD based on the original data samples as expected. The estimate for symmetrized induced TVD $\tilde{TVD}^{SD}_S(P,Q)$ is hence 0.193885.

We ignored the quantile transformed samples here since they behave similarly to halfspace depth samples but on a larger domain. The estimation procedure is similar and empirical samples of $QT(x;P_N)$ can be calculated as $\frac{\sum_{i=1}^N \textbf{1}\{X_i\leq x\}}{N}$, which is exactly a uniform grid of $\{\frac{1}{N},\ldots,\frac{N}{N}=1\}$ in $[0,1]$ if samples points $\{X_i\}_{i=1}^N$ are considered against its own empirical distribution $P_N$. The LV-TVD estimated value ($l=20$) using the quantile transformed samples give a very stable result of $0.19452$ for both $\tilde{TVD}^{QT}(P||Q)$ and $\tilde{TVD}^{QT}(Q||P)$ which should be the same value as the ground-truth $TVD(P,Q)$. These values are higher than HD or SD estimated ones but closer to the LV-TVD estimate between the original data samples since the quantile transformation step preserves the ground-truth TVD value between $P,Q$ and hence not a lower bound approximation. 
\section{Performance Improvement in LV-TVD Estimation Procedures for Halfspace Depth-induced TVD}
In the numerical example above, we considered estimating an LV-TVD value for $TVD(P_Q^{HD},U)$ based on two halfspace depth random variable samples $\{HD(X_i;P_N)\}_{i=1}^N$ and $\{HD(Y_j;P_N)\}_{j=1}^N$. Vice versa, for estimating $TVD(Q_P^{HD},U)$, we used two halfspace depth random variable samples $\{HD(X_i;Q_N)\}_{i=1}^N$ and $\{HD(Y_j;Q_N)\}_{j=1}^N$. Since we know the random samples $\{HD(X_i;P_N)\}_{i=1}^N$ and $\{HD(Y_j;Q_N)\}_{i=1}^N$ are distributed as the uniform distribution $U(0,\frac{1}{2})$, but we are not using this information in the original LV-TVD estimation procedure, hence introducing additional variance into the estimated values. Next, we present two techniques to handle the uniform distribution on the right-hand side of these induced TVDs. The same techniques can be applied to quantile transformed samples when estimating $\tilde{TVD}^{QT}(P||Q),\tilde{TVD}^{QT}(Q||P)$ based on the uniform distribution $U(0,1)$.

\subsection{Improving Estimator Performance using Finer Samples from $U(0,\frac{1}{2})$}
As an obvious extension, we can use coarser samples from $U(0,\frac{1}{2})$ to replace halfspace depth samples $\{HD(X_i;P_N)\}_{i=1}^N$ or  $\{HD(Y_j;Q_N)\}_{j=1}^N$, which we know are distributed according to $U(0,\frac{1}{2})$. Notice that these samples are in fact $\frac{N}{2}$ uniformly spaced samples $\{\frac{1}{N},\frac{2}{N},\ldots,\frac{1}{2}\}$ for a given even number of samples $N$, where each value occurs exactly twice. (Similar results can be obtained for an odd number of samples $N$ with a slight difference.) We can take a much larger even number of samples $\{U_i\}_{i=1}^{M}$ which corresponds to taking each of $\{\frac{1}{M},\frac{2}{M},\ldots,\frac{1}{2}\}$ exactly twice. Using this sample in replacement of $\{HD(X_i;P_N)\}_{i=1}^N$ or  $\{HD(Y_j;Q_N)\}_{j=1}^N$ in the LV-TVD estimator for $TVD(P_Q^{HD},U)$ or $TVD(Q_P^{HD},U)$, we can improve the convergence behavior of these estimators and get estimates with less variance. The trade-off is that with more samples, the LP problem in the LV-TVD estimator contains more decision variables. Consider the same data sample as in Figure \ref{fig:demo_data}. Numerical results in section 6 report that using the two-sample LV-TVD procedure for halfspace depth data samples $(l=20)$, the estimated values are $0.19220$ and $0.19028$ respectively (hence a symmetrized estimate of $0.19124$), whereas the ground-truth value is $TVD(P_Q^{HD},U) = TVD(Q_P^{HD},U) = TVD(P,Q) = 0.19358$ and the straightforward LV-TVD of original data samples $(l=4)$ gives $0.19497$. 

Applying a more refined uniform sample $\{U_i\}_{i=1}^{M}$ with $M = 2N = 2000$ in replacement of the empirical samples $\{HD(X_i;P_N)\}_{i=1}^N$ or  $\{HD(Y_j;Q_N)\}_{j=1}^N$, we obtain LV-TVD estimates (using $l=20$) of $0.19115$ and $0.19214$ for $TVD(P_Q^{HD},U)$ and $TVD(Q_P^{HD},U)$ respectively, and hence a symmetrized estimate of $\tilde{TVD}^{HD}_S(P,Q) = 0.19165$. Similarly, for $\{U_i\}_{i=1}^{M}$ with $M = 4N = 4000$, we obtain LV-TVD estimates (using $l=20$) of $0.19154$ and $0.19202$ for $TVD(P_Q^{HD},U)$ and $TVD(Q_P^{HD},U)$ respectively, and hence a symmetrized estimate of $\tilde{TVD}^{HD}_S(P,Q) = 0.19178$. Clearly, the proposed extension with finer and finer samples from $U(0,\frac{1}{2})$ increasingly reduces the variance as well as improves the convergence behavior of the LV-TVD estimators toward their target TVD values. As a remark, a similar technique can be derived for the simplicial depth random variables by using more refined samples from the Beta distribution with similar spacing as empirical samples $\{SD(X_i;P_N)\}_{i=1}^N$ and $\{SD(Y_j;Q_N)\}_{j=1}^N$. For quantile transformed samples, it is mostly similar to the halfspace depth case, where we can use a finer grid of $M$ uniform samples $\{\frac{1}{M},\frac{2}{M},\ldots,1\}$ from the standard uniform distribution $U(0,1)$. For $\tilde{TVD}^{QT}(P||Q)=\tilde{TVD}^{QT}(Q||P)$, the LV-TVD estimated results ($l=20$) based on $M=2N=2000$ samples are both $0.19498$, using their quantile transformed samples respectively.

\subsection{Variance Reduction with One-sided Estimators and Restricted Function Class}
We next propose another modification of the LV-TVD estimator that directly takes in one sample $\{Z_i\}_{i=1}^N$ in 1-D and estimates its LV-TVD against a ground-truth uniform distribution $U(a,b),b>a$, where the domain of empirical samples $\{Z_i\}_{i=1}^N$ is also $[a,b]$. See \cite{DingICSTA23} for more details. Without loss of generality let $\{Z_i\}_{i=1}^N$ be given in increasing order. This one-sided LV-TVD estimator relies on the extension of optimal identifier functions $\{f^\star(Z_i)\}_{i=1}^N$ to the entire domain of $[a,b]$ based on a piecewise linear interpolation, and a constant extension at the two endpoints. Following the notations in \cite{DingICSTA23}, the original LV-TVD distance between empirical distribution $P^Z_{N}$ of $\{Z_i\}_{i=1}^N$ and uniform distribution $U(a,b)$ is:
$$\gamma^{l}_{LVD}(P^Z_N,U) = \frac{1}{2} \sup_{f\in\{f:||f||_{L}\leq l,||f||_\infty\leq 1\}}\{\frac{1}{N}\sum_{i=1}^N f(Z_i)-\frac{1}{b-a}\int_{a}^b f(x)dx\}$$
In the LP formulation of the above empirical distance, the decision variables are $a_i = f(Z_i),\forall i=1,\ldots,N$, and the integral can be rewritten based on the piecewise linear interpolation of $\{f(Z_i)\}_{i=1}^N$ to the entire domain $[a,b]$. This results in the following LP problem, where $\{Z_i\}_{i=1}^N\in[a,b]$ are assumed to be in non-decreasing order:

\begin{equation}\label{empLVLP_uniform}
\begin{aligned}
    \tilde{\gamma}^l_{LVD}(P^Z_N,U) &= \frac{1}{2} \max_{a_1,...,a_N}\{\frac{1}{N}\sum_{i=1}^N a_i - \frac{(Z_1 - a) a_1}{b-a} - \sum_{i=1}^{N-1} \frac{(Z_{i+1}-Z_i)(a_{i+1}+a_{i})}{2(b-a)} - \frac{(b-Z_{N})a_{N}}{b-a}\} \\
    s.t. &-l(Z_{i+1}-Z_i)\leq a_{i+1}-a_i\leq l(Z_{i+1}-Z_i),\forall i=1,...,N-1\\
    &-1\leq a_i\leq1,\forall i=1,...,N
\end{aligned}
\end{equation}
The problem in \eqref{empLVLP_uniform} follows from a reduction of the constraints in the original LP in 1-D setting, and we have $\tilde{\gamma}^{l}_{LVD}(P^Z_N,U)\leq \gamma^{l}_{LVD}(P^Z_N,U)$, see discussions in \cite{DingICSTA23}. 

The optimal objective value of this LP gives an asymptotic lower bound of LV-TVD estimate of the induced TVD distance $\tilde{TVD}^{HD}(Q||P) = TVD(P^{HD}_Q,U)$, when the input data samples $\{Z_j\}_{j=1}^N$ are statistical depth random variables $\{HD(Y_j;P_N)\}_{j=1}^N$ and the domain $[a,b] = [0,\frac{1}{2}]$. This is a lower bound because we restricted the identifier functions based on piecewise linear interpolations of the node values. Generally, this lower bound is relatively tight. Similarly, using this one-sided LV-TVD estimator \eqref{empLVLP_uniform} we can estimate a tight lower bound of $\tilde{TVD}^{HD}(P||Q) = TVD(Q^{HD}_P,U)$ when the input data samples $\{Z_i\}_{i=1}^N$ are statistical depth random variables $\{HD(X_i;Q_N)\}_{i=1}^N$. Trivially, we also obtain an estimate of the symmetrized induced TVD $\tilde{TVD}^{HD}_S(P,Q)$ based on these two one-sided LV-TVD estimates. To demonstrate the behavior of one-sided LV-TVD estimators based on ground-truth uniform distributions, we applied it to the problem in section 6 for the HD case. For the same data samples shown in Figure \ref{fig:demo_data}, consider the induced TVD $TVD(P_Q^{HD},U)$ and $TVD(Q_P^{HD},U)$. Numerical results in section 6 report that using the two-sample LV-TVD procedure for halfspace depth data samples $(l=20)$, the estimated values are $0.19220$ and $0.19028$ respectively, whereas the ground-truth value is $TVD(P_Q^{HD},U) = TVD(Q_P^{HD},U) = TVD(P,Q) = 0.19358$ and the straightforward LV-TVD of original data samples $(l=4)$ gives $0.19497$.

Using the proposed one-sided restricted LV-TVD estimator $\tilde{\gamma}^{l}_{LVD}(P^Z_N,U)$ for $TVD(P_Q^{HD},U)$ and $TVD(Q_P^{HD},U)$, where $U$ is a given uniform distribution $U(0,\frac{1}{2})$ and $l=20$ as before, we obtain estimated values $0.19071,0.19075$ respectively, which are smaller than the two-sample LV-TVD estimated values but show significantly smaller variance as well. This effect should be more significant, especially when the data size is smaller, where the ground-truth uniform distribution is represented by coarser empirical samples. Hence, the estimator in \eqref{empLVLP_uniform} serves as a stabilizing technique for this type of estimation problems based on halfspace depth random variables. Similar technique should also apply to the simplicial depth case, although the ground-truth density of the Beta distribution is harder to work with in the LP problem for the empirical LV-TVD distance.

\subsection{Summary of Results}
Table \ref{tab1} summarizes the two techniques we discussed in this section and their improved performance in terms of variance reduction and/or improved convergence based on the numerical example we provided in section 6, as compared against directly using halfspace depth samples for both distributions. Again, the ground-truth value is $TVD(P,Q) = 0.19358$ and the direct LV-TVD estimate ($l=4$) based on original data samples is $0.19497$. All LV-TVD estimators for induced TVD estimates based on halfspace depth observations use $l=20$ as the Lipschitz parameter. The original data sample size is $N = 1000$ for both distributions. We label the different techniques in Table \ref{tab1} based on different approaches to handle the ground-truth uniform distribution $U(0,\frac{1}{2})$ in all the halfspace depth-induced TVD estimates.

\begin{table}[!ht]
    \centering
      \caption{\centering Empirical LV-TVD Estimators for induced TVD based on Halfspace Depth Samples}
      \scalebox{0.85}{
       \begin{tabular}{|| c | c | c | c ||} 
 \hline
Input Samples/Techniques for $U(0,\frac{1}{2})$ & $\tilde{TVD}^{HD}(Q||P)$ & $\tilde{TVD}^{HD}(P||Q)$ & $\tilde{TVD}^{HD}_{S}(P,Q)$ \\  
 \hline
Empirical samples of size $N$ (section 6) & 0.19028 & 0.19220 & 0.19124\\
\hline
More refined samples of size $M=2N$ (section 7.1) & 0.19115 & 0.19214 & 0.19165\\
\hline
More refined samples of size $M=4N$ (section 7.1) & 0.19154 & 0.19202 & 0.19178\\
\hline
Density-based variational lower bound (section 7.2) & 0.19071 & 0.19075 & 0.19073\\
\hline
\end{tabular}}
\label{tab1}
\end{table}

\section{Conclusion}

We show that the halfspace depth random variable $HD(X;P)$, where $X\sim P$ from a univariate continuous probability distribution, is distributed as a uniform distribution $U(0,\frac{1}{2})$, regardless of $P$ being symmetric or not. The distribution function for the simplicial depth random variable is also computed, which turns out to be first-order stochastic dominant over that of the halfspace depth random variable, having a larger mean and smaller variance. We also discussed the kernel depth function and its relation with maximum mean discrepancy. Finally, we propose a depth-induced divergence for two distributions based on divergences between statistical depth distributions in-between them and showed specific cases using halfspace or simplicial depth functions and with total variation distance as the divergence function. In particular, we show how an empirical Lipschitz variational total variation distance estimator benefits from such transformations.

\end{document}